\documentclass[aps,pra,reprint,superscriptaddress,nofootinbib,longbibliography,floatfix]{revtex4-2}

\usepackage[T1]{fontenc}
\usepackage{lmodern}
\usepackage{amsmath,amssymb,mathtools,bm}
\usepackage{graphicx}
\usepackage{booktabs}
\usepackage{microtype}
\usepackage{amsthm}
\setcitestyle{numbers,sort&compress}
\usepackage[hidelinks]{hyperref}
\hypersetup{pdftitle={Sharp Passive Broadband Limits for Multiresonator Quantum Memories},pdfauthor={Maxim V. Churilov},pdfsubject={Passive quantum-memory front ends, Bode--Fano sharpness, and exact continuum certification},pdfkeywords={quantum memory front end, impedance matching, Herglotz functions, Bode--Fano bound, Sturm certificate}}
\usepackage[nameinlink,capitalize]{cleveref}

\newtheorem{theorem}{Theorem}
\newtheorem{proposition}{Proposition}

\newtheorem{corollary}{Corollary}
\theoremstyle{definition}

\newcommand{\ii}{\mathrm{i}}
\newcommand{\dd}{\mathrm{d}}
\newcommand{\RePart}{\operatorname{Re}}
\newcommand{\esssup}{\operatorname*{ess\,sup}}
\newcommand{\B}{\mathcal{B}}

\newcommand{\rmax}{\rho}

\newcommand{\Log}{\operatorname{Log}}

\makeatletter
\AtBeginDocument{%
  \@ifpackageloaded{hyperref}{\hypersetup{hidelinks}}{}%
}
\makeatother

\begin{document}

\title{Sharp Passive Broadband Limits for Multiresonator Quantum Memories}

\author{Maxim V. Churilov}
\email{churilovm1305@gmail.com}
\affiliation{Independent Researcher, Orenburg, Russia}

\date{July 16, 2026}

\begin{abstract}
Broadband quantum-memory front ends are often assessed by center-frequency impedance matching or sampled efficiency curves.  Neither supplies a continuous-band certificate, and transfer into a controlled field is not automatically storage in a finite register.  We formulate a passive one-port multiresonator front end through a positive-real spectral admittance and prove two complementary structural results.  First, a $d$-dimensional register has zero worst-case write efficiency on the full infinite-dimensional space $L^2([-B,B])$; a positive uniform guarantee for one universal linear write map therefore requires a finite-dimensional admitted signal subspace or an infinite-dimensional register.  Input-dependent mode-selective control addresses a different, nonuniversal task.  Second, for stationary diagonal passive-atom admittances the Bode--Fano value $\exp[-\pi\kappa/(2B)]$ is the exact infinite-order infimum of the band reflection norm.  We construct the nonrational outer extremal, prove its complete Herglotz representation (a bounded in-band density plus exterior point masses accumulating exponentially at the band edges) by a boundary Cauchy--Poisson argument, prove uniform closed-band approximation by finite positive-residue resonators with an explicit error bound, and prove that no finite rational design attains the infimum.  At fixed pole locations the finite-order minimax problem is quasiconvex in oscillator strengths.  Finally, exact rational arithmetic reduces a continuous reflection bound to positivity of one univariate polynomial, certified by Sturm root counting.  Two independent implementations replay the band certificates for all ten reported candidates and the Routh--Hurwitz stability and minimum-phase certificate for the headline 11-mode candidate.  In units $\kappa=2$ and $B=1$, an explicit 11-mode design satisfies $0.06410221\le\|r\|_\infty<0.0641023$, hence transfers more than $0.995890895$ of every admitted in-band photon into controlled output channels.  The same number is a memory-write guarantee precisely on subspaces for which the downstream capture map is isometric.
Every optimal dual certificate can be chosen on its contact set, with a
support bound fixed solely by the number of active moment constraints.
\end{abstract}

\maketitle

\section{Introduction}

Quantum memories are interfaces between propagating photonic or microwave modes and long-lived material or resonator degrees of freedom.  Their usefulness in networking, transduction, synchronization, and fault-tolerant architectures depends not only on a high peak efficiency, but on a well-defined signal space, a reversible storage register, and a guarantee that holds for every admitted waveform \cite{Lvovsky2009,Gorshkov2007,Heshami2016,Lei2023}.  Cavity enhancement and impedance matching have enabled high-efficiency atomic-frequency-comb and spin-ensemble protocols \cite{Afzelius2009,AfzeliusSimon2010,Moiseev2010,Sabooni2013,Hedges2010}, while multiresonator architectures provide additional spectral degrees of freedom and have been studied theoretically and experimentally in optical and superconducting platforms \cite{Moiseev2017,Moiseev2018,Perminov2019,Bao2021,Matanin2023,Perminov2023,Moiseev2026Atomic}.

Recent experiments also illustrate why peak, average, modal, and worst-case efficiencies must not be conflated.  An integrated rare-earth memory reported $80.3(7)\%$ efficiency for weak coherent pulses, $69.8(1.6)\%$ for heralded single photons, and 20 temporal modes with lower average efficiency \cite{Meng2026}; a separate interconnect-oriented memory reported more than $80\%$ efficiency across 11 spatial modes and qubit fidelities above $99\%$ \cite{Luo2026}.  These advances motivate complementary certificates that specify the admitted signal space and control the least favorable waveform.

Four distinctions are essential when such a device is advertised as broadband and near unit efficiency.  First, vanishing reflection at one frequency is a local condition; it does not control the largest reflection over a continuous band.  Second, a dense numerical grid is evidence but not an exact continuum proof unless interpolation error is bounded.  Third, a missing output photon may have entered a controlled traveling field, a recoverable register, or an uncontrolled bath.  Fourth, even a lossless transfer into controlled fields cannot be mapped isometrically from all of $L^2([-B,B])$ into a finite-dimensional register.  The scalar identity ``efficiency $=1-|r|^2$'' is consequently a storage statement only after both the output channels and the admitted signal subspace have been specified.

This work supplies a unified framework for these issues.  The central object is the self-energy $\Sigma(s)$ seen by a common input resonator.  Passivity places $\Sigma$ in the positive-real (equivalently, half-plane Herglotz) class, and the input reflection is its Cayley transform \cite{Nedic2021,Ivanenko2019}.  Classical Fano--Darlington theory already connects logarithmic matching bounds to increasingly high-order equal-ripple network synthesis \cite{Darlington1939,Bode1945,Fano1950}.  Sum-rule and Herglotz-function limits of this general type also have precedents in broadband passive cloaking \cite{CassierMilton2017}, passive approximation \cite{Ivanenko2019}, and recent Bode--Fano absorption bounds for subwavelength particles \cite{Corsaro2026}.  The contribution here is narrower and explicit: exact sharpness is proved inside the diagonal finite positive-residue class, including a complete measure and a quantitative closed-band atomic approximation, and finite candidates are certified in exact arithmetic.  A controlled-output dilation converts the reflection defect into transfer probability; a further capture map is required before that transfer can be called memory write.  We prove the associated finite-register obstruction, formulate the correct finite-dimensional signal-subspace criterion, and then solve the infinite-order stationary minimax value.

The sharpness proof is constructive.  An outer Schur function has constant modulus on a slightly enlarged band and exhausts the logarithmic Bode--Fano budget.  Its inverse Cayley transform has an explicit positive Herglotz measure, established here by a direct boundary Cauchy--Poisson argument rather than by invoking a representation theorem whose positive-reality hypothesis is the very statement at issue.  Lorentzian regularization followed by positive atomic quadrature produces admissible finite-resonator approximants, while rationality rules out attainment at any finite order.  Thus the Bode--Fano floor is not merely a lower bound: it is the exact infimum of the finite-atom closure.

The complementary computational advance is an independently rerunnable continuum certificate.  For a rationalized design, $|r(\ii\omega)|^2$ is a rational function of $x=\omega^2$.  The statement $|r(\ii\omega)|<\bar\rho$ for all $|\omega|\le B$ is equivalent to strict positivity of one polynomial on a compact interval.  Exact Sturm arithmetic proves that positivity without a grid, and exact Routh--Hurwitz determinants establish stability and minimum phase.  Applied to an 11-mode candidate obtained by nonlinear synthesis, this yields a reflection bracket narrower than $10^{-7}$.

We deliberately separate five levels of assertion.  Controlled-port transfer is fixed by scattering, whereas memory write additionally depends on a capture map and its signal domain.  The Bode--Fano floor is the exact infinite-order infimum under the stated stationary passive assumptions.  The fixed-support oscillator-strength problem is quasiconvex.  The reported rational designs have exact continuum upper certificates.  Global optimality at each finite movable-pole order, and the rate at which those optima approach the floor, remain open approximation-theoretic questions.  This separation avoids promoting equiripple plots or unsuccessful global searches into theorems.

\section{Passive one-port model and the storage qualification}
\label{sec:model}

\subsection{Reduced input--output model}

Let $a$ be a common resonator coupled to a one-dimensional input field with rate $\kappa$ and to internal linear modes.  In a rotating frame, elimination of the internal modes gives
\begin{equation}
 \left[s+\frac{\kappa+\gamma_c}{2}+\Sigma(s)\right]a(s)
 =\sqrt{\kappa}\,A_{\mathrm{in}}(s)+\text{noise inputs},
 \label{eq:reduced}
\end{equation}
with $A_{\mathrm{out}}=A_{\mathrm{in}}-\sqrt{\kappa}a$.  The coherent reflection amplitude is
\begin{equation}
 r(s)=\frac{s-\kappa/2+\gamma_c/2+\Sigma(s)}
 {s+\kappa/2+\gamma_c/2+\Sigma(s)}.
 \label{eq:reflection-general}
\end{equation}
A microscopic resonator--ensemble realization leads to nested Stieltjes-type self-energies.  For synthesis and certification we use the finite atomic class
\begin{equation}
 \Sigma_N(s)=\sum_{n=1}^{N}\frac{w_n}{s+\Gamma_n+\ii\nu_n},
 \qquad w_n>0,\quad \Gamma_n>0,
 \label{eq:atomicSigma}
\end{equation}
with conjugate pairs included so that the total transfer function has real coefficients.  Each term is positive real in $\RePart s>0$ because
\begin{equation}
 \RePart\frac{w_n}{s+\Gamma_n+\ii\nu_n}
 =\frac{w_n(\RePart s+\Gamma_n)}{|s+\Gamma_n+\ii\nu_n|^2}>0.
\end{equation}
More general passive networks are covered by matrix positive-real and Darlington realization theory \cite{Gardiner1985,GoughZhang2015,Nurdin2016,Darlington1939,Dewilde1999}; the diagonal atomic class is used because it maps directly to buildable detuned resonators and admits exact rational certificates.

\begin{proposition}[Positive-real--Schur correspondence]
For $\gamma_c=0$, if $\Sigma$ is analytic and positive real in $\RePart s>0$, then
\begin{equation}
 r(s)=\frac{s-\kappa/2+\Sigma(s)}{s+\kappa/2+\Sigma(s)}
\end{equation}
is analytic and satisfies $|r(s)|<1$ in the open right half-plane.  Conversely, a target $r$ belongs to this memory class precisely when
\begin{equation}
 \Sigma_r(s)=\frac{\frac{\kappa}{2}[1+r(s)]-s[1-r(s)]}{1-r(s)}
 \label{eq:inverse-cayley}
\end{equation}
is positive real and admits the chosen physical realization.
\end{proposition}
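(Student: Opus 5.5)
The plan is to reduce everything to the Cayley transform of a single positive-real function. Set $Z(s)=s+\Sigma(s)$. In the open right half-plane, $s$ is analytic with $\RePart s>0$, and $\Sigma$ is analytic with $\RePart\Sigma\ge 0$ by assumption. Hence $Z$ is analytic and $\RePart Z(s)\ge\RePart s>0$ there. Writing $c=\kappa/2>0$, we have $r=(Z-c)/(Z+c)$.

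The forward direction then follows in two steps.

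\textbf{Analyticity.} The denominator $Z+c$ has real part at least $\RePart s+c>0$, so it never vanishes. Thus $r$ is analytic as a quotient of analytic functions with a nonvanishing denominator.

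\textbf{Strict contraction.} Put $Z=X+\ii Y$ with $X>0$. Then
\begin{equation}
|Z+c|^2-|Z-c|^2=4cX>0,
\end{equation}
so $|r(s)|<1$ pointwise. Only strict positivity of $\RePart Z$ is used. It is supplied by the $s$ term even if $\RePart\Sigma$ vanishes somewhere, for example when $\Sigma$ is an imaginary constant or zero. This is why we do not need $\Sigma$ itself to be strictly positive real.

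For the converse, solve $r=(Z-c)/(Z+c)$ for $Z$ wherever $r\neq1$. This gives $Z=c(1+r)/(1-r)$, and hence
\begin{equation}
\Sigma_r=Z-s=\frac{c(1+r)-s(1-r)}{1-r},
\end{equation}
which is exactly \eqref{eq:inverse-cayley}. We check that the two maps are mutually inverse. Substituting $\Sigma_r$ into \eqref{eq:reflection-general} with $\gamma_c=0$ returns $r$. Conversely, if $r$ arises from a positive-real $\Sigma$, then $r\neq1$ by the strict inequality just proved, so $\Sigma_r$ is well defined and equals $\Sigma$.

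Hence $r$ is produced by some admissible self-energy if and only if $\Sigma_r$, which is necessarily the unique candidate, is positive real. For the ``memory class'' one additionally requires $\Sigma_r$ to lie in the chosen realization class, for instance the atomic form \eqref{eq:atomicSigma} or its closure. That requirement is a definitional restriction layered on top of positive-reality, and it is carried along verbatim.

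The only step needing care is the converse, where a given Schur function $r$ might have $r\equiv1$ or isolated points with $r=1$. If $|r|<1$ is assumed in the open half-plane, then $1-r$ never vanishes and $\Sigma_r$ is analytic. The statement that $\Sigma_r$ is positive real is then the nontrivial condition: $\RePart[c(1+r)/(1-r)]=c(1-|r|^2)/|1-r|^2>0$ gives $\RePart Z>0$, but $\RePart\Sigma_r=\RePart Z-\RePart s$ need not be nonnegative. The proposition honestly places this condition in the hypothesis ("precisely when \dots\ is positive real") rather than deriving it. Accordingly, I will state explicitly that the Schur property alone is necessary but not sufficient. The sufficiency content is the equivalence between the existence of an admissible $\Sigma$ and the positive-realness of $\Sigma_r$, which the bijection argument above settles.
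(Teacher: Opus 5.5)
Your proposal is correct and follows the same route as the paper: set $z=s+\Sigma(s)$, use $\RePart z>0$ so that the Cayley map $(z-\kappa/2)/(z+\kappa/2)$ sends it into the unit disk, and solve that map for $\Sigma$ to obtain $\Sigma_r$. Your identity $|Z+c|^2-|Z-c|^2=4c\RePart Z$, your check that the two maps are mutually inverse, and your remark that the Schur property alone does not force $\RePart\Sigma_r\ge0$ fill in details that the paper's three-line proof leaves implicit.
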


\begin{proof}
Set $z=s+\Sigma(s)$.  Then $\RePart z>0$, and $(z-\kappa/2)/(z+\kappa/2)$ maps the right half-plane into the unit disk.  Solving the Cayley transform for $\Sigma$ gives \cref{eq:inverse-cayley}.
\end{proof}

\subsection{Controlled-output dilation and memory qualification}

Positive realness certifies passivity, not memory.  The widths $\Gamma_n$ in \cref{eq:atomicSigma} first define controlled Markov output channels, not long-lived registers.  Introduce modes $b_n$ obeying
\begin{align}
 \dot a&=-\frac{\kappa}{2}a-\ii\sum_n g_n b_n+\sqrt{\kappa}A_{\mathrm{in}},\label{eq:dil-a}\\
 \dot b_n&=-(\Gamma_n+\ii\nu_n)b_n-\ii g_n^*a+\sqrt{2\Gamma_n}\,S_{n,\mathrm{in}},\label{eq:dil-b}\\
 S_{n,\mathrm{out}}&=S_{n,\mathrm{in}}-\sqrt{2\Gamma_n}\,b_n.
 \label{eq:dil-out}
\end{align}
Here $|g_n|^2=w_n$.  Let $\mathcal K_{\mathrm{ctl}}$ denote the admitted one-photon subspace of the outgoing fields $S_{n,\mathrm{out}}$.  A downstream memory is described on the one-photon sector by a contraction
\begin{equation}
 C_{\mathrm{cap}}:\mathcal K_{\mathrm{ctl}}\longrightarrow\mathcal H_{\mathrm{mem}},
 \label{eq:ucap}
\end{equation}
where $\mathcal H_{\mathrm{mem}}$ is a long-lived register.  It is an isometry only on signal subspaces captured without loss.  Uncontrolled intrinsic loss must be represented by additional bath ports and excluded from the write probability.  \Cref{fig:interface} summarizes this qualification chain.

\begin{figure*}[t]
 \includegraphics[width=0.96\textwidth]{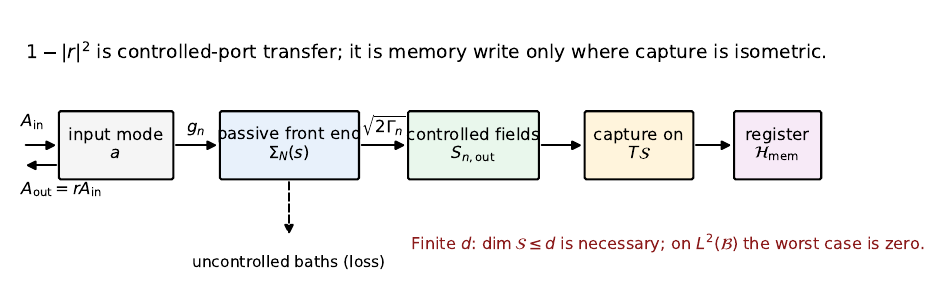}
 \caption{Operational qualification of the passive interface.  The rational admittance fixes scattering from the signal port into controlled output fields.  Reflection defect becomes memory-write probability only on a specified signal subspace that the capture map stores isometrically; dashed channels represent uncontrolled loss and do not count as storage.  A finite-dimensional register cannot satisfy this condition on all of $L^2(\B)$.}
 \label{fig:interface}
\end{figure*}

For vacuum controlled-channel inputs, the transfer amplitude from $A_{\mathrm{in}}$ to the $n$th controlled output is
\begin{equation}
 t_n(s)=\frac{\ii\sqrt{2\kappa\Gamma_n w_n}}
 {(s+\Gamma_n+\ii\nu_n)[s+\kappa/2+\Sigma_N(s)]}.
 \label{eq:t-transfer}
\end{equation}
Direct substitution on the imaginary axis gives
\begin{equation}
 |r(\ii\omega)|^2+\sum_n|t_n(\ii\omega)|^2=1.
 \label{eq:unitarity}
\end{equation}

\begin{theorem}[Exact controlled-channel transfer certificate]
Let $f\in L^2(\B)$ be a normalized one-photon spectrum.  Suppose that all non-signal inputs in \cref{eq:dil-a,eq:dil-b,eq:dil-out} begin in vacuum and that no uncontrolled bath is present.  The probability transferred from the signal port into the controlled output channels is
\begin{equation}
 \eta_{\mathrm{ctl}}[f]=1-\int_{\B}|r(\ii\omega)|^2|f(\omega)|^2\dd\omega.
 \label{eq:write-exact}
\end{equation}
Consequently,
\begin{equation}
 \inf_{\substack{\|f\|_2=1\\\operatorname{supp}f\subseteq\B}}
 \eta_{\mathrm{ctl}}[f]
 =1-\esssup_{\omega\in\B}|r(\ii\omega)|^2.
 \label{eq:minimax-operational}
\end{equation}
For a closed admitted signal subspace $\mathcal S\subseteq L^2(\B)$, the same expression is the memory-write probability on $\mathcal S$ if and only if $C_{\mathrm{cap}}$ is isometric on the controlled-output image of $\mathcal S$.
\end{theorem}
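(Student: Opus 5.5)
The plan is to work in the one-photon sector of the linear dilation \cref{eq:dil-a,eq:dil-b,eq:dil-out} and reduce every claim to the frequency-domain scattering matrix. The dilated system is a linear passive quantum network with vacuum in all non-signal inputs. Its single-excitation dynamics is therefore governed by the classical input--output relations, acting on the one-photon amplitude $f$.

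\textbf{Step 1: solve the dilation in the Laplace domain.} Eliminate the $b_n$ from \cref{eq:dil-b} to recover \cref{eq:reduced} with $\gamma_c=0$ and $\Sigma=\Sigma_N$. This yields $A_{\mathrm{out}}(\ii\omega)=r(\ii\omega)f(\omega)$ and $S_{n,\mathrm{out}}(\ii\omega)=t_n(\ii\omega)f(\omega)$, with $t_n$ as in \cref{eq:t-transfer}. I will check that the vacuum inputs $S_{n,\mathrm{in}}$ contribute nothing to the one-photon amplitudes. The outgoing one-photon state then lives in $\bigoplus_{\mathrm{ports}}L^2(\B)$. Passivity (the Hurwitz denominators, which follow from positive realness as in the Proposition) ensures these boundary values are well defined. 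The probability left in the signal port is $\int_\B|r|^2|f|^2\,\dd\omega$. By \cref{eq:unitarity} the controlled ports carry $\int_\B\sum_n|t_n|^2|f|^2\,\dd\omega=1-\int_\B|r|^2|f|^2\,\dd\omega$. Because no bath is present and the internal modes decay (the $\Gamma_n>0$ and $\kappa>0$ make the closed-loop poles lie in the open left half-plane), no probability remains trapped asymptotically. This gives \cref{eq:write-exact}. The unitarity identity \cref{eq:unitarity} is taken as given, since it was asserted by direct substitution.

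\textbf{Step 2: the minimax identity.} This is the standard fact that the norm of a multiplication operator equals the essential supremum of its symbol. The bound $\eta_{\mathrm{ctl}}[f]\ge 1-\esssup|r|^2$ is immediate. For the reverse inequality, fix $\epsilon>0$ and let $E=\{\omega\in\B:|r(\ii\omega)|^2>\esssup|r|^2-\epsilon\}$, which has positive measure. Take the normalized indicator $f=\mathbf 1_E/|E|^{1/2}$. Since $r$ is in fact continuous on the closed band, the essential supremum equals the maximum, so one could also use bump functions; the indicator argument covers the general $L^\infty$ statement.

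\textbf{Step 3: the memory-write equivalence.} Let $U:\mathcal S\to\mathcal K_{\mathrm{ctl}}$ denote the unnormalized map sending $f$ to its controlled-output component $(t_nf)_n$. The write probability is $\|C_{\mathrm{cap}}Uf\|^2$. Since $C_{\mathrm{cap}}$ is a contraction, $\|C_{\mathrm{cap}}Uf\|^2\le\|Uf\|^2=\eta_{\mathrm{ctl}}[f]$.
\begin{itemize}
\item If $C_{\mathrm{cap}}$ is isometric on $\overline{U\mathcal S}$, equality holds for every $f\in\mathcal S$.
\item Conversely, suppose equality holds for all $f\in\mathcal S$. Then $\|C_{\mathrm{cap}}v\|=\|v\|$ on $U\mathcal S$, and by continuity also on its closure. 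Polarization then gives the isometry property on the closed span.
\end{itemize}

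The main obstacle is interpretive rather than technical. It lies in the ``only if'' direction: one must decide whether ``the same expression'' means pointwise equality for every $f\in\mathcal S$ or only equality of the infima. I will adopt the pointwise reading, under which the contraction argument above closes the equivalence. I will note explicitly that equality of worst-case values alone would not force isometry.
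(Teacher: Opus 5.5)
Your proposal is correct and follows essentially the same route as the paper: the frequency-resolved balance \cref{eq:unitarity} integrated against $|f|^2$, the essential-supremum norm of the multiplication operator for \cref{eq:minimax-operational}, and the contraction argument for the capture equivalence, read pointwise in $f$ as the paper's proof also does. Your additions (the Hurwitz/no-trapping justification and the explicit indicator-function witness for the infimum) fill in steps the paper states without elaboration.
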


\begin{proof}
The signal-plus-controlled-field scattering map is an isometry.  Equation \eqref{eq:unitarity} gives the frequency-resolved balance, and integration against $|f|^2$ gives the probability transferred to $\mathcal K_{\mathrm{ctl}}$.  Multiplication by $|r|^2$ is a bounded operator on $L^2(\B)$ whose operator norm is its essential supremum, yielding \cref{eq:minimax-operational}.  A subsequent contraction preserves this probability for every $f\in\mathcal S$ exactly when it preserves the norm of every vector in the corresponding controlled-output image.
\end{proof}

\begin{proposition}[Finite-register obstruction]
Let $\mathcal S$ be an admitted one-photon signal space and let the complete linear write map be $W:\mathcal S\to\mathcal H_{\mathrm{mem}}$.  If $\dim\mathcal H_{\mathrm{mem}}=d<\dim\mathcal S$, then
\begin{equation}
 \inf_{\|f\|=1,\ f\in\mathcal S}\|Wf\|^2=0.
 \label{eq:finite-rank-obstruction}
\end{equation}
In particular, every finite-dimensional register has zero worst-case write efficiency on $\mathcal S=L^2(\B)$, irrespective of the reflection spectrum.  Isometric capture of a $K$-dimensional signal subspace requires $d\ge K$.
\end{proposition}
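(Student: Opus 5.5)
The plan is a rank–nullity argument; the only subtlety is handling an infinite-dimensional $\mathcal S$. First, when $\dim\mathcal S<\infty$, I treat $W$ as a linear map from a space of dimension strictly greater than $d$ into a $d$-dimensional space. Rank--nullity then gives $\dim\ker W\ge\dim\mathcal S-d\ge1$. Any unit vector $f\in\ker W$ has $\|Wf\|^2=0$, so the infimum in \cref{eq:finite-rank-obstruction} is attained and equals zero.

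When $\mathcal S$ is infinite dimensional, I will not appeal to a full-space kernel argument directly. Instead I pick any subspace $\mathcal S_0\subseteq\mathcal S$ with $\dim\mathcal S_0=d+1$ and apply the finite-dimensional argument to the restriction $W|_{\mathcal S_0}$. This produces a unit vector $f\in\mathcal S_0\subseteq\mathcal S$ with $Wf=0$.

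This step needs no boundedness or continuity of $W$, only linearity, so it also covers write maps that are not contractions. If $W$ is a contraction, as it is when $W=C_{\mathrm{cap}}$ composed with the isometric scattering map of the preceding theorem, the conclusion holds a fortiori. Since $L^2(\B)$ is infinite dimensional, the case $\mathcal S=L^2(\B)$ is immediate, and nothing in the argument refers to $r$. That independence is the claimed irrespective-of-reflection statement: a kernel vector is lost no matter how small $|r(\ii\omega)|$ is on the band.

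For the last sentence, I use that isometric capture of a $K$-dimensional subspace means $\|Wf\|=\|f\|$ for all $f$ in that subspace. This forces $\ker W\cap\mathcal S=\{0\}$, so $W$ is injective there. Injectivity gives $K=\dim W(\mathcal S)\le d$. Equivalently, this is the contrapositive of the first part, because $d<K$ would produce a unit vector with zero write norm.

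I see no real obstacle. The only point needing care is to state explicitly that the infimum is attained by a kernel vector, rather than merely approached, so that the result needs no topological assumptions on $W$ or closedness of $\mathcal S$.
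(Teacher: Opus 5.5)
Your proof is correct and takes essentially the paper's route: rank--nullity gives a nonzero vector in $\ker W$, which after normalization has zero write probability, and the requirement $d\ge K$ is the contrapositive. Restricting $W$ to a $(d+1)$-dimensional subspace only spells out the infinite-dimensional case, which the paper dispatches with the remark that the statement includes it. One small notational fix: in your last paragraph, $\ker W\cap\mathcal S$ and $W(\mathcal S)$ should refer to the $K$-dimensional subspace, not all of $\mathcal S$.
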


\begin{proof}
The rank of $W$ is at most $d$.  If $\dim\mathcal S>d$, rank--nullity gives a nonzero $f\in\ker W$; after normalization it attains zero write probability.  The statement includes the infinite-dimensional case.
\end{proof}

The proof is elementary linear algebra, and we deliberately state the result as a proposition rather than a theorem; its value is operational, because broadband-memory efficiency claims routinely leave the admitted signal space unspecified, and \cref{eq:finite-rank-obstruction} shows that the omission is not innocent.  The transfer theorem is the reason to optimize $\|r\|_\infty$, rather than a center value or an average, but \cref{eq:finite-rank-obstruction} fixes its interpretation.  Without a subspace-qualified capture isometry, \cref{eq:unitarity} certifies loading into controlled propagating channels, not storage.  If some $\Gamma_n$ denotes irreversible material decay, \cref{eq:write-exact} is only an absorption identity.  The full-band number reported below is therefore an unconditional controlled-port transfer guarantee and a conditional memory guarantee.

\section{Stationary passive limitations}
\label{sec:limits}

We first isolate what no finite stationary passive architecture can do.

\begin{theorem}[No finite exact broadband matching]
Let $\Sigma_N$ be a finite rational positive-real self-energy that vanishes as $s\to\infty$.  Then the reflection \cref{eq:reflection-general} with $\gamma_c=0$ cannot vanish on any open interval of the real-frequency axis.
\end{theorem}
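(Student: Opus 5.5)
The argument rests on rationality and the identity theorem, with the asymptotic condition $\Sigma_N(s)\to0$ ruling out the degenerate case. Since $\Sigma_N$ is rational, \cref{eq:reflection-general} with $\gamma_c=0$ shows that $r$ is rational in $s$. We may write
\begin{equation}
 r(s)=\frac{P(s)}{Q(s)},\qquad P(s)=(s-\kappa/2)D(s)+M(s),\quad Q(s)=(s+\kappa/2)D(s)+M(s),
\end{equation}
where $\Sigma_N=M/D$ with coprime polynomials $M,D$ and $\deg M<\deg D$; the degree condition is exactly the hypothesis $\Sigma_N(s)\to0$. The real-frequency axis is $s=\ii\omega$. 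Suppose $r(\ii\omega)=0$ for all $\omega$ in an open interval $I$.

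The first step handles the evaluation points. At points of $I$ where $Q(\ii\omega)\neq0$ we get $P(\ii\omega)=0$. We need such points to be all but finitely many, so that $P$ has infinitely many zeros. For this, note that $Q=P+\kappa D$, so $Q\not\equiv0$: its leading term is $s^{\deg D+1}$ times the leading coefficient of $D$. Hence $Q$ has finitely many zeros. Alternatively, we may interpret vanishing of $r$ on $I$ as vanishing of the boundary values of the analytic function $|r|<1$ (by the Positive-real--Schur correspondence); on $\ii\mathbb R$ these boundary values agree with the rational function off finitely many poles. In either reading, the polynomial $P$ vanishes at infinitely many points of $I$, so $P\equiv0$.

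The second step derives the contradiction. Compute the degree of $P$: since $\deg M<\deg D$, the term $(s-\kappa/2)D(s)$ dominates, and $P$ has leading term $c\,s^{\deg D+1}$, where $c\neq0$ is the leading coefficient of $D$. So $P\not\equiv0$, which is the contradiction. Equivalently, $r(s)\to1$ as $s\to\infty$, so $r$ is a nonzero rational function and therefore has at most $\deg P$ zeros on the whole plane. This also covers the trivial case $N=0$ ($\Sigma\equiv0$, $r=(s-\kappa/2)/(s+\kappa/2)$).

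The main subtlety is not algebraic. It is making sure the cancellation between $P$ and $Q$ cannot hide anything, and that the statement is meant on $\ii\mathbb R$ rather than on the real $s$-axis. If one insists on the real $s$-axis, the same argument applies verbatim, since it only uses infinitely many zeros. I would also remark that the nonvanishing of the leading coefficient is exactly where the hypothesis $\Sigma_N\to0$ enters. Without it, a term proportional to $s$ in $\Sigma$ could cancel $s D$ only if its coefficient were $-1$, and positive realness would exclude that coefficient as well. Finally, I would note that the theorem is a finite-order statement, which foreshadows the Bode--Fano bound later in the paper: even the nonrational outer extremal does not vanish in-band.
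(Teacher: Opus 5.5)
Your argument is correct and is essentially the paper's: rationality plus the identity theorem force the numerator $s-\kappa/2+\Sigma_N(s)$ to vanish identically, i.e.\ $\Sigma_N=\kappa/2-s$, which the decay hypothesis excludes; your leading-term computation for $P$ is the same observation written in polynomial form. The paper also notes that positive realness independently excludes $\Sigma_N=\kappa/2-s$, which matches your closing remark.
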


\begin{proof}
If the rational function $r$ has boundary value zero on an open interval and no poles there, analytic continuation implies $r\equiv0$.  Its numerator would then require $\Sigma_N(s)=\kappa/2-s$, which neither vanishes at infinity nor remains positive real for large positive $s$.
\end{proof}

The stronger quantitative limitation follows from the logarithmic area of a passive reflection coefficient.  Let $\B=[-B,B]$ and assume that $r$ is rational, Schur in the right half-plane (boundedness already excludes imaginary-axis poles), obeys $r(\infty)=1$, and has expansion
\begin{equation}
 r(s)=1-\frac{\kappa}{s}+O(s^{-2}).
 \label{eq:asymptotic-r}
\end{equation}
The usual half-plane Poisson--Jensen argument gives
\begin{equation}
 \int_{-\infty}^{\infty}\log\frac{1}{|r(\ii\omega)|}\dd\omega
 =\pi\kappa-2\pi\sum_{z_k\in\mathbb{C}_+}\RePart z_k
 \le \pi\kappa,
 \label{eq:bode-area}
\end{equation}
where the sum is over open right-half-plane zeros, counted with multiplicity \cite{Bode1945,Fano1950}.  Imaginary-axis zeros are permitted: for a rational $r\not\equiv0$ they contribute integrable logarithmic singularities, and \cref{eq:bode-area} follows by a limiting contour argument detailed in the Supplemental Material.  The equality condition in \cref{eq:bode-area} is therefore a property of a particular minimum-phase reflection, not a universal property of every optimum.

\begin{corollary}[Bode--Fano minimax floor]
Under the preceding assumptions, any stationary passive memory satisfying $|r(\ii\omega)|\le\rmax$ for $|\omega|\le B$ obeys
\begin{equation}
 \boxed{\quad \rmax\ge \exp\left(-\frac{\pi\kappa}{2B}\right).\quad}
 \label{eq:BF-floor}
\end{equation}
Thus the worst-case controlled-port transfer, and therefore any downstream write efficiency, is bounded by
\begin{equation}
 \eta_{\mathrm{ctl}}^{\mathrm{wc}}
 \le 1-\exp\left(-\frac{\pi\kappa}{B}\right).
\end{equation}
\end{corollary}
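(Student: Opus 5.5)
The plan is to derive the corollary directly from the logarithmic area identity \cref{eq:bode-area}, by bounding the integrand separately in and out of the band. The first step is to observe that the integrand $\log(1/|r(\ii\omega)|)$ is nonnegative at every frequency, because $r$ is Schur in the right half-plane and hence $|r(\ii\omega)|\le 1$ almost everywhere on the imaginary axis (as a boundary value of a bounded analytic function, or simply by continuity since $r$ is rational without imaginary-axis poles). Discarding the out-of-band contribution can then only decrease the left-hand side of \cref{eq:bode-area}, which gives
\begin{equation}
 \int_{-B}^{B}\log\frac{1}{|r(\ii\omega)|}\,\dd\omega\le\pi\kappa .
\end{equation}

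The second step uses the hypothesis $|r(\ii\omega)|\le\rmax$ on $\B$. From it, $\log(1/|r(\ii\omega)|)\ge\log(1/\rmax)$ there, so the in-band integral is at least $2B\log(1/\rmax)$. If $\rmax=0$ the hypothesis would force $r$ to vanish on an interval, which is excluded by the preceding theorem (or directly by the left side being $+\infty$). Combining the two bounds gives $2B\log(1/\rmax)\le\pi\kappa$, and hence $\rmax\ge\exp[-\pi\kappa/(2B)]$, which is \cref{eq:BF-floor}. Imaginary-axis zeros of $r$ inside the band cause no difficulty: they only make the integrand larger, and they are integrable by the remark following \cref{eq:bode-area}.

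For the efficiency statement, I would invoke \cref{eq:minimax-operational} from the controlled-channel transfer theorem, which gives $\eta_{\mathrm{ctl}}^{\mathrm{wc}}=1-\esssup_{\B}|r|^2$. Any design therefore has $\esssup_{\B}|r|\le\rmax$ only for admissible $\rmax$, and the sharpest admissible choice is $\rmax=\esssup_{\B}|r|$ itself. Applying the floor to that choice yields $\esssup_{\B}|r|^2\ge e^{-\pi\kappa/B}$, and so $\eta_{\mathrm{ctl}}^{\mathrm{wc}}\le 1-e^{-\pi\kappa/B}$. Finally, since $C_{\mathrm{cap}}$ is a contraction, the downstream write probability never exceeds the controlled transfer, so the same bound holds for memory write.

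The only delicate point is the justification of \cref{eq:bode-area} itself in the presence of boundary zeros. The paper supplies that identity, so the remaining argument is elementary monotonicity. I would simply check that the hypotheses match: $r$ is rational and Schur, $r(\infty)=1$, and $r$ obeys the asymptotics \cref{eq:asymptotic-r}.
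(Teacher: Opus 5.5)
Your proposal is correct and follows the paper's argument: the full log-area integral in \cref{eq:bode-area} is at most $\pi\kappa$, and the in-band part is at least $2B\log(1/\rmax)$, which gives $\rmax\ge e^{-\pi\kappa/(2B)}$. You also make explicit three steps the paper leaves implicit, and each is handled correctly: the out-of-band integrand is nonnegative because $|r(\ii\omega)|\le 1$; the efficiency bound follows from \cref{eq:minimax-operational}; and the write bound follows from the contractivity of $C_{\mathrm{cap}}$.
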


\begin{proof}
The integral over the design band in \cref{eq:bode-area} is at least $2B\log(1/\rmax)$.  Combining the two inequalities gives \cref{eq:BF-floor}.
\end{proof}

The logarithmic bound leaves open whether the floor is compatible with the restricted positive-residue self-energies in \cref{eq:atomicSigma}.  The following result resolves that question.  Let $\mathfrak A$ be the union over all finite, real-symmetric atomic self-energies with $w_n,\Gamma_n>0$.

\begin{theorem}[Sharpness in the atomic closure and finite-order nonattainment]
For the band $[-B,B]$ and fixed $\kappa>0$,
\begin{equation}
 \inf_{\Sigma\in\mathfrak A}
 \left\|\frac{s-\kappa/2+\Sigma(s)}{s+\kappa/2+\Sigma(s)}
 \right\|_{L^\infty(\ii[-B,B])}
 =\exp\!\left(-\frac{\pi\kappa}{2B}\right).
 \label{eq:sharp-infimum}
\end{equation}
Every finite rational member satisfies a strict inequality.  For any $L>B$, define $\beta=\kappa/(2L)$ and
\begin{equation}
 r_L(s)=\exp\!\left[\ii\beta
 \Log\!\left(\frac{s+\ii L}{s-\ii L}\right)\right],
 \qquad \RePart s>0,
 \label{eq:outer-extremal}
\end{equation}
where the analytic logarithm tends to zero at infinity.  Then
\begin{equation}
 |r_L(\ii\omega)|=\rho_L:=e^{-\pi\kappa/(2L)}
 \quad (|\omega|<L),
 \label{eq:outer-flat}
\end{equation}
and the inverse Cayley self-energy
\begin{equation}
 \Sigma_L(s)=\frac{\kappa}{2}\frac{1+r_L(s)}{1-r_L(s)}-s
 \label{eq:outer-sigma}
\end{equation}
is positive real, vanishes at infinity, admits the complete Herglotz representation constructed in the Supplemental Material, and is a uniform boundary limit of self-energies in $\mathfrak A$ on the closed band $[-B,B]$ whenever $B<L$.
\end{theorem}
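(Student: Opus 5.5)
The plan has four parts: the lower bound, the flat-modulus outer extremal, positive realness with its Herglotz representation, and quadrature approximation followed by nonattainment.

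\textbf{Lower bound and strictness.} The lower bound in \cref{eq:sharp-infimum} is immediate from the Bode--Fano minimax floor corollary. Every $\Sigma\in\mathfrak A$ is rational, positive real and $O(1/s)$, so $r$ is a rational Schur function with $r=1-\kappa/s+O(s^{-2})$, and \cref{eq:bode-area} applies. For strict inequality at finite order, suppose $|r(\ii\omega)|\le\rho_B:=e^{-\pi\kappa/(2B)}$ on $[-B,B]$ with equality of the suprema. Then \cref{eq:bode-area} forces three things. First, $\log(1/|r|)=\log(1/\rho_B)$ almost everywhere on the band. Second, $|r|=1$ almost everywhere off the band. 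Third, $r$ has no right-half-plane zeros. Since $|r(\ii\omega)|^2$ is rational in $\omega$, it would then equal the constant $\rho_B^2$ identically and also equal $1$ identically. This is a contradiction, and it shows that no finite member attains the infimum.

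\textbf{Flat modulus of the outer extremal.} For $s=\ii\omega$ with $|\omega|<L$, the ratio $(\omega+L)/(\omega-L)$ is a negative real number. The branch of $\Log$ that is analytic on $\RePart s>0$ and vanishes at infinity therefore takes boundary values $\log|\cdot|\pm\ii\pi$. I will fix the sign by continuity from large real $s$, where the argument of $(s+\ii L)/(s-\ii L)$ lies in $(0,\pi)$. This gives $\operatorname{Im}\Log\to\pi$, so $|r_L|=e^{-\beta\pi}=\rho_L$, which is \cref{eq:outer-flat}. For $|\omega|>L$ the logarithm is real and $|r_L|=1$. In the open half-plane, $\operatorname{Im}\Log\in(0,\pi)$ gives $|r_L|<1$, so $r_L$ is Schur. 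Expanding, $\Log\frac{s+\ii L}{s-\ii L}=2\ii L/s+O(s^{-3})$, hence $r_L=1-2\beta L/s+\dots=1-\kappa/s+O(s^{-2})$. Inserting this into \cref{eq:outer-sigma} gives $\Sigma_L=\kappa/2\cdot(2s/\kappa)(1+O(s^{-2}))-s=O(1/s)$, so $\Sigma_L$ vanishes at infinity. This expansion needs care about the $s^{-2}$ term; by oddness in $1/s$ of the exponent, I expect $\Sigma_L\sim c/s$ with $c>0$.

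\textbf{Positive realness and the Herglotz measure (main obstacle).} Writing $\Sigma_L=\frac{\kappa}{2}\frac{1+r_L}{1-r_L}-s$, the first term is positive real because $r_L$ is Schur. Subtracting $s$, however, destroys the easy argument, and this is the main difficulty. I will instead compute the boundary data of $\Sigma_L$ directly. On $|\omega|<L$ the real part is bounded and positive, since $|r_L|=\rho_L<1$; this gives a bounded in-band density. On $|\omega|>L$ the modulus $|r_L|=1$, so $\RePart\Sigma_L=0$ except where $r_L(\ii\omega)=1$. There $\beta\log|(\omega+L)/(\omega-L)|\in2\pi\mathbb Z$, which produces a sequence $\omega_k\to\pm L$ accumulating exponentially at the band edges, and the resulting simple poles carry residues that I will show are positive. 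I will then define the candidate function $\tilde\Sigma(s)=\int\frac{\dd\mu(\omega)}{s-\ii\omega}$ from this positive measure $\mu$ and prove $\Sigma_L=\tilde\Sigma$. The difference is analytic, vanishes at infinity, and has zero real boundary part away from the point $\pm\ii L$. At $\pm\ii L$ I will need growth estimates, using $|r_L|$ bounded and $1-r_L$ controlled on rays, to exclude hidden point masses; Phragmén--Lindelöf/Cauchy--Poisson arguments then give a difference of zero. This gives the complete representation and hence positive realness.

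\textbf{Approximation by finite atoms.} First regularize: shift $s\mapsto s+\epsilon$, which replaces the measure by Lorentzians of width $\epsilon>0$ and converges uniformly on $\ii[-B,B]$ because $B<L$ keeps the singular edges away. Next, truncate the far point masses and discretize the in-band density by positive quadrature. Each atom then has $w_n>0$ and $\Gamma_n=\epsilon$, and the kernel $1/(\ii\omega+\epsilon-\ii\nu)$ is Lipschitz in $\nu$ with constant $\epsilon^{-2}$, which yields explicit error bounds. Symmetrize the atoms into conjugate pairs. Since the denominator $s+\kappa/2+\Sigma$ has real part at least $\kappa/2$ on the axis, the Cayley map is Lipschitz there, so the uniform convergence of $\Sigma$ transfers to the reflection $r$. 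Finally, let $L\downarrow B$, so that $\rho_L\to\rho_B$, which attains the infimum in the closure.
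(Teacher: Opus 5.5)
Your proposal is correct and follows essentially the paper's route. That route is: Bode--Fano plus rational rigidity for the lower bound and strict nonattainment; the boundary analysis of $r_L$; a boundary Cauchy--Poisson identification of the Herglotz measure (a bounded in-band density plus exterior atoms at $\omega_k=L\coth(\pi k/\beta)$ with positive residues $\omega_k^2-L^2$); and then a Lorentzian shift, positive symmetric quadrature, and $L\downarrow B$ in that order. The step you leave open, ruling out a singularity of $\Sigma_L-\tilde\Sigma$ at $\pm\ii L$, goes through with logarithmic bounds on circles $|s\mp\ii L|=\delta_k$ chosen where $r_L\approx-|r_L|$, followed by Schwarz reflection and Liouville; and since both Cayley denominators have real part at least $\kappa/2$, your Lipschitz step gives the slightly sharper error $4\varepsilon/\kappa$ in place of the paper's $4\varepsilon/(\kappa-2\varepsilon)$.
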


\begin{proof}[Proof outline]
The M\"obius quotient in \cref{eq:outer-extremal} maps the open right half-plane onto the upper half-plane, so the analytic logarithm has imaginary part in $(0,\pi)$ there and $\rho_L<|r_L|<1$; the boundary argument gives \cref{eq:outer-flat}, while $r_L(s)=1-\kappa/s+O(s^{-2})$.  Because $\log|r_L|$ equals $-\pi\beta$ times the harmonic measure of the band segment, it coincides with the Poisson integral of its own boundary values, so $r_L$ has no inner factor and is outer \cite{Garnett2007}.  The Supplemental Material proves the complete Herglotz representation of \cref{eq:outer-sigma} by a boundary Cauchy--Poisson argument that does not presuppose positive reality: the measure consists of a bounded smooth density on $(-L,L)$ and positive point masses at a symmetric exterior sequence accumulating exponentially at $\pm L$, with finite total mass $L^2/3+\kappa^2/12$.  Shifting this measure by a linewidth $\Gamma>0$ and applying positive symmetric quadrature produces exactly the atoms in \cref{eq:atomicSigma}, with the explicit reflection error bound $4\varepsilon/(\kappa-2\varepsilon)$ for a uniform band self-energy error $\varepsilon<\kappa/2$.  Sending the quadrature error to zero, then $\Gamma\downarrow0$, then $L\downarrow B$---in this order, each step uniform on $[-B,B]$---proves the upper direction in \cref{eq:sharp-infimum}; the Bode--Fano corollary supplies the lower direction.

If a finite rational design attained the floor, equality would hold at every step of the area argument.  Hence $|r|$ would equal the floor almost everywhere in-band and $1$ almost everywhere out-of-band.  The rational identity $N(\ii\omega)N(-\ii\omega)=D(\ii\omega)D(-\ii\omega)$ on any exterior interval would then hold identically, forcing $|r|=1$ on the entire axis, a contradiction.
\end{proof}

The assumptions matter.  Active elements, nonstationary coupling, prior knowledge of a single temporal mode, or non-Foster loading can evade the stationary passive formulation, but they consume a different resource \cite{Dilley2012,Shlivinski2018,Greggio2026Optimal}.  The theorem determines the infinite-order value; it does not determine the optimum at a prescribed finite number of atoms.

\section{Passive minimax synthesis}
\label{sec:synthesis}

We normalize $\kappa=2$ and $B=1$.  The Bode--Fano floor is then
\begin{equation}
 \rmax_{\mathrm{BF}}=e^{-\pi}=0.043213918\ldots.
\end{equation}
We restrict to a real-symmetric $N=1+2M$ mode family,
\begin{equation}
 \Sigma_N(s)=\frac{w_0}{s+\gamma_0}
 +\sum_{j=1}^{M}w_j\left[
 \frac{1}{s+\gamma_j+\ii\delta_j}
 +\frac{1}{s+\gamma_j-\ii\delta_j}\right],
 \label{eq:symmetric-class}
\end{equation}
with all parameters positive.  The design objective is
\begin{equation}
 \min_{\bm\gamma,\bm\delta,\bm w>0}
 \max_{|\omega|\le1}|r(\ii\omega)|.
 \label{eq:global-nonlinear}
\end{equation}
The movable-pole problem is nonconvex.  For fixed poles, however, feasibility at a prescribed $0<\rmax<1$ is convex in the weights.

\begin{theorem}[Fixed-support quasiconvexity]
Fix $\gamma_0$ and $(\gamma_j,\delta_j)$.  Let $h(\omega)^T\bm w=\Sigma_N(\ii\omega)$, $A=\ii\omega-1$, and $C=\ii\omega+1$.  For fixed $\rmax<1$, the condition $|A+h^T\bm w|\le\rmax|C+h^T\bm w|$ is the convex quadratic inequality
\begin{multline}
 (1-\rmax^2)|h^T\bm w|^2
 +2\RePart\{(A-\rmax^2 C)^*h^T\bm w\}\\
 +|A|^2-\rmax^2|C|^2\le0,
 \label{eq:convex-weight}
\end{multline}
with $\bm w\ge0$.  Therefore the exact continuous-frequency feasible set, obtained by intersecting this convex constraint over all $\omega\in[-1,1]$, is convex.  Minimization over $\rmax$ is a quasiconvex semi-infinite program; a finite-grid implementation is a numerical approximation that requires a posteriori continuum verification.
\end{theorem}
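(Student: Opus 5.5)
The proof is a direct algebraic expansion followed by standard convexity bookkeeping. With $\kappa=2$ and $\gamma_c=0$, the reflection is $r(\ii\omega)=(A+h^T\bm w)/(C+h^T\bm w)$, where $A=\ii\omega-1$ and $C=\ii\omega+1$. The denominator never vanishes: $\RePart(C+h^T\bm w)\ge 1$, because each atomic term has nonnegative real part on the axis when $\bm w\ge0$. So $|r|\le\rmax$ is equivalent to
\[
|A+h^T\bm w|^2-\rmax^2|C+h^T\bm w|^2\le 0.
\]
Write $z=h^T\bm w$ and expand $|A+z|^2=|A|^2+2\RePart(A^*z)+|z|^2$, and similarly for $C$. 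This gives exactly \cref{eq:convex-weight}.

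Next I will establish convexity in $\bm w$ for fixed $\omega$. The map $\bm w\mapsto z=h(\omega)^T\bm w\in\mathbb C\cong\mathbb R^2$ is real-linear. The function $|z|^2$ is a convex quadratic in $(\RePart z,\operatorname{Im}z)$, and its coefficient $1-\rmax^2$ is positive because $\rmax<1$. A convex function composed with a linear map is convex. The middle term $2\RePart\{(A-\rmax^2C)^*z\}$ is real-linear in $\bm w$, and the last term is constant. So the left side of \cref{eq:convex-weight} is a convex function of $\bm w$, and its sublevel set at $0$ is convex. Intersecting with the convex cone $\bm w\ge0$ preserves convexity.

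The semi-infinite statement then follows because an arbitrary intersection of convex sets is convex. The set $F(\rmax)=\{\bm w\ge0:\max_{|\omega|\le1}|r(\ii\omega)|\le\rmax\}$ equals the intersection over $\omega\in[-1,1]$ of the per-frequency sets. For quasiconvexity I will use the objective $\phi(\bm w)=\max_{|\omega|\le1}|r(\ii\omega;\bm w)|$, whose sublevel sets are $F(\rmax)$. For $\rmax<1$ these sets are convex by the above. For $\rmax\ge1$ they are all of $\{\bm w\ge0\}$, since $|r|<1$ on the axis by the positive-real–Schur correspondence (passivity gives $|A+z|\le|C+z|$ when $\RePart z\ge0$). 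Hence every sublevel set is convex, which is the definition of quasiconvexity, and the minimum over $\rmax$ can be found by bisection on convex feasibility problems.

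The only point needing care, and the one I regard as the main obstacle, is the equivalence between the ratio bound and the polynomial inequality together with the sign of the quadratic coefficient. Both rest on $\rmax<1$ and the nonvanishing denominator, and the rest is routine. The closing remark is a methodological statement: a finite grid only enforces finitely many of the constraints, so its feasible set contains the continuum one and must be followed by a posteriori continuum verification.
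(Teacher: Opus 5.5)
Your proof is correct and follows the paper's route. You expand both squared moduli and note that $|h^T\bm w|^2$ is a positive-semidefinite quadratic form in real $\bm w$; your composition of $|z|^2$ with a real-linear map is the same fact as the paper's $\RePart(h^*h^T)\succeq0$. You then use $1-\rmax^2>0$ and intersect over frequencies and the nonnegative orthant, exactly as the paper does. Your extra remarks on the nonvanishing denominator and on the sublevel sets with $\rmax\ge1$ are harmless refinements. Only $|r|\le1$ is needed there, and indeed $|r|=1$ exactly at $\bm w=0$, so your stated strict inequality $|r|<1$ should be non-strict.
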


\begin{proof}
Squaring the modulus inequality and collecting terms gives \cref{eq:convex-weight}.  For real $\bm w$, $|h^T\bm w|^2=\bm w^T\RePart(h^*h^T)\bm w$, and $\RePart(h^*h^T)$ is positive semidefinite because this quadratic form equals a squared modulus.  Multiplication by $1-\rmax^2>0$ therefore preserves convexity.  Intersection over frequencies and with the nonnegative orthant preserves convexity.
\end{proof}

This is a structured semi-infinite program in the sense of \cite{Hettich1993}; exactness refers to the continuum formulation, not to a finite grid used by a numerical solver.

We use two complementary discovery routes.  The finite-order sequence reported below was obtained by soft-maximum continuation, warm starts in $N$, multistart local optimization, and fixed-support convex polishing.  The sharpness theorem supplies a deterministic alternative: discretize the explicit Herglotz measure at a small positive linewidth, then polish the positive weights on that support.  Both routes are synthesis heuristics at fixed finite order, not proofs of global movable-pole optimality.  The measure-seeded route is itself end-to-end certifiable with no optimization at any stage: the Supplemental Material reports uniform-quadrature designs at linewidths $\Gamma=0.14,0.10,0.06$ whose exact rational Bernstein certificates give band bounds $0.22320$, $0.17942$, and $0.13021$ at $34$, $48$, and $78$ modes, converging as predicted by the a priori bound $\rho_L+4\varepsilon/(\kappa-2\varepsilon)$ but far less economically than the optimized sequence.  Ripple counts and independent searches are useful diagnostics, but the classical alternation theorem cannot be imported without proving the required varisolvence or Haar property for this nonlinear positive-real family.  Modern rational-minimax optimality conditions provide a promising dual and second-order framework, but adapting them to positivity, movable stable poles, and continuum constraints remains open \cite{Zhang2026Minimax}.

The one-mode problem is analytically solvable.  The choice
\begin{equation}
 \gamma_0=\sqrt{2},\qquad w_0=2
\end{equation}
gives
\begin{equation}
 |r(\ii\omega)|^2=
 \frac{\omega^4-\omega^2+6-4\sqrt2}
 {\omega^4-\omega^2+6+4\sqrt2},
\end{equation}
whose maxima on $[-1,1]$ occur at $\omega=0,\pm1$ and equal $(3-2\sqrt2)^2$.  A direct lower argument, given in the Supplemental Material, shows
\begin{equation}
 \rmax_1=3-2\sqrt2=0.171572875\ldots.
 \label{eq:N1exact}
\end{equation}

Figure~\ref{fig:spectra} shows the numerically synthesized sequence.  The finite-order curves become increasingly equiripple-like, but the formal result reported below is the exact continuum enclosure for each explicit rationalized candidate, not a global lower certificate of the same order.

\begin{figure}[tb]
 \includegraphics[width=\columnwidth]{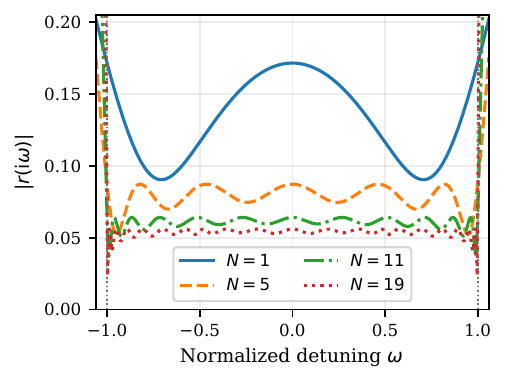}
 \caption{Reflection magnitude for four passive symmetric candidates.  Vertical dotted lines mark the design band.  The spectra flatten as the number of controlled internal modes increases.}
 \label{fig:spectra}
\end{figure}

\section{Exact computer-assisted continuum certificate}
\label{sec:exactcert}

A grid evaluation can miss a narrow peak.  Derivative bounds provide one remedy, but rational systems admit a stronger exact reduction.  Write the rationalized reflection as
\begin{equation}
 r(s)=\frac{N(s)}{D(s)},\qquad N,D\in\mathbb{Q}[s].
\end{equation}
Because the coefficients are real,
\begin{equation}
 |N(\ii\omega)|^2=n(\omega^2),\qquad
 |D(\ii\omega)|^2=d(\omega^2),
\end{equation}
for $n,d\in\mathbb{Q}[x]$.  For a rational proposed upper bound $\bar\rmax$, define
\begin{equation}
 H_{\bar\rmax}(x)=\bar\rmax^2 d(x)-n(x).
 \label{eq:Hpoly}
\end{equation}
If $D$ has no imaginary-axis zero, then
\begin{equation}
\begin{aligned}
 H_{\bar\rmax}(x)>0\quad&(0\le x\le1)\\[-2pt]
 &\Longleftrightarrow\quad
 |r(\ii\omega)|<\bar\rmax\quad(|\omega|\le1).
\end{aligned}
 \label{eq:equiv-H}
\end{equation}

\begin{theorem}[Exact rational continuum certificate]
Let $r=N/D$ have real rational coefficients, let $\bar\rmax>0$ be rational, and assume that $D$ is strictly Hurwitz.  Then the strict band inequality in \cref{eq:equiv-H} is certified by the finite exact conditions
\begin{align}
 H_{\bar\rmax}(0)&>0,\qquad H_{\bar\rmax}(1)>0,\nonumber\\
 \#\{x\in[0,1]:H_{\bar\rmax}(x)=0\}&=0.
 \label{eq:sturm-conditions}
\end{align}
The root count is evaluated exactly from a Sturm sequence.  Strict Hurwitz stability is certified by a positive leading coefficient and positive leading principal Hurwitz determinants of $D$; applying the same test to $N$ certifies minimum phase.
\end{theorem}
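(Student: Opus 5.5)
The plan is to establish the equivalence \cref{eq:equiv-H} first, then show that the three finite conditions in \cref{eq:sturm-conditions} force strict positivity of $H_{\bar\rmax}$ on $[0,1]$. The stability and minimum-phase claims are handled separately by invoking the Routh--Hurwitz criterion.

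\textbf{Step 1: the reduction.} Since $N,D$ have real coefficients, $\overline{N(\ii\omega)}=N(-\ii\omega)$. Hence $|N(\ii\omega)|^2=N(\ii\omega)N(-\ii\omega)$ is an even real polynomial in $\omega$, and so it is a polynomial $n(x)$ in $x=\omega^2$ with rational coefficients. The same holds for $d$.

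Strict Hurwitz stability of $D$ means it has no zeros on the imaginary axis, so $d(x)>0$ for all $x\ge0$. Therefore, for $|\omega|\le1$,
\[
|r(\ii\omega)|<\bar\rmax \iff n(\omega^2)<\bar\rmax^2 d(\omega^2) \iff H_{\bar\rmax}(\omega^2)>0 .
\]
The map $\omega\mapsto\omega^2$ sends $[-1,1]$ onto $[0,1]$, which gives \cref{eq:equiv-H}.

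\textbf{Step 2: positivity from the finite conditions.} Suppose $H_{\bar\rmax}(0)>0$, $H_{\bar\rmax}(1)>0$, and $H_{\bar\rmax}$ has no root in $[0,1]$. Then $H_{\bar\rmax}$ is a continuous, nonvanishing function on the connected interval, so it keeps one sign there. Since it is positive at $0$, it is positive on all of $[0,1]$. Strictly, the no-root condition alone suffices once $H_{\bar\rmax}(0)>0$ is known; keeping both endpoint values guards the Sturm convention on half-open intervals.

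\textbf{Step 3: exactness of the root count.} This is the main technical point. I would use Sturm's theorem in its general form, which counts the \emph{distinct} real roots in $(a,b]$ as $V(a)-V(b)$ when $a,b$ are not roots. It remains valid for non-squarefree $H$, because the sequence is built from $H$ and $H'$ by Euclidean remainders and every term is divisible by $\gcd(H,H')$. Since $0$ and $1$ are not roots by the endpoint conditions, the count $V(0)-V(1)$ is exactly the number of distinct roots in $[0,1]$. All remainders lie in $\mathbb{Q}[x]$, so sign variations are decided in exact rational arithmetic and no floating-point step enters.

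\textbf{Step 4: stability and minimum phase.} For a real polynomial with positive leading coefficient, the Routh--Hurwitz theorem states that all zeros lie in $\RePart s<0$ if and only if all leading principal minors of the Hurwitz matrix are positive. These minors are rational numbers, computed exactly. Applying the criterion to $D$ gives strict Hurwitz stability, which Step 1 requires. Applying it to $N$ places all zeros of $r$ in the open left half-plane, which is the minimum-phase property. If a leading coefficient is negative, the polynomial is first multiplied by $-1$, which changes neither its zeros nor the ratio $r$.
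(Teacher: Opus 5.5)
Your proposal is correct and follows the same route as the paper's proof. Positive endpoints and no roots make $H_{\bar{\rmax}}$ sign-constant on $[0,1]$, Sturm's theorem gives the exact distinct-root count, and the Routh--Hurwitz criterion gives stability and minimum phase; you also supply details the paper leaves implicit, namely the even-polynomial reduction to $n,d$ and the validity of Sturm counting for non-squarefree $H_{\bar{\rmax}}$.

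Two minor remarks. First, negating only $N$ would flip the sign of $r$, but this is harmless because the Hurwitz test depends only on the zero set. Second, $H_{\bar{\rmax}}>0$ on $[0,1]$ already forces $d>0$ there since $n\ge0$, so the certified band bound does not actually need the Hurwitz hypothesis, as the paper itself notes.
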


\begin{proof}
A continuous real polynomial with positive endpoints and no root on the interval cannot change sign or touch zero, proving \cref{eq:equiv-H}.  Sturm's theorem computes the number of distinct real roots in a rational interval using sign variations of an exact Euclidean remainder sequence \cite{Basu2006}.  The final statement is the Routh--Hurwitz criterion \cite{Gantmacher1959}.
\end{proof}

The accompanying script constructs $N,D,n,d,H$ in exact rational arithmetic, records coefficient hashes, counts roots, and produces a rigorous lower sample by evaluating $|r|^2$ at a rational frequency.  Neither Hurwitz test is needed for the band bound itself---the Sturm conditions alone certify \cref{eq:equiv-H}; the determinant tests add the separate physical statements of strict stability and of minimum phase, the latter being what upgrades the log-area inequality \cref{eq:bode-area} to an equality for this device.  Decimal parameters are not treated as uncertain real numbers: they define the explicit rational device being certified.  Fabrication uncertainty is a separate question addressed in \cref{sec:robustness}.  The complete derivations, rational parameters, proof-record fields, and reproduction protocol are provided in the Supplemental Material.

\paragraph*{Computational reproducibility.}
Every reported symbolic and numerical certificate can be regenerated from the archived rational parameters.  A standard-library verifier, implemented independently of the primary computer-algebra code, rebuilds the certificate polynomials and reproduces all thirty coefficient hashes, every Sturm count, the endpoint signs, and the reported Hurwitz signs.  The Supplemental Material specifies the proof records, dependencies, and reproduction commands.

\subsection{Headline 11-mode certificate}

The certified 11-mode design has the form \cref{eq:symmetric-class}; its parameters are listed in \cref{tab:n11}.  All entries shown to 12 decimal places are interpreted exactly as decimal rationals by the certificate script.  The residue $w_j$ is the squared coherent coupling in the normalized model, and each nonzero $\delta_j$ represents two conjugately detuned modes.

\begin{table}[tb]
\caption{Rationalized 11-mode design in units $\kappa/2=B=1$.  The central row represents one mode; each other row represents the pair $\pm\delta_j$.}
\label{tab:n11}
\begin{ruledtabular}
\resizebox{\columnwidth}{!}{%
\begin{tabular}{crrr}
 group & {$\gamma_j$} & {$\delta_j$} & {$w_j$}\\
\colrule
0 & 0.311158924086 & 0 & 0.157703925598\\
1 & 0.289160164239 & 0.304079422147 & 0.141094382446\\
2 & 0.233411007066 & 0.571893057110 & 0.103373815417\\
3 & 0.163312213040 & 0.781631334355 & 0.063754097299\\
4 & 0.093506841885 & 0.924941677224 & 0.032007238487\\
5 & 0.032916047574 & 0.999323092623 & 0.010263869247\\
\end{tabular}
}
\end{ruledtabular}
\end{table}

For this rational system, $N$ and $D$ have degree 12, $H$ has degree 12, and exact arithmetic gives
\begin{equation}
 \boxed{\quad
 0.06410221\le \rmax_{11}<0.0641023.
 \quad}
 \label{eq:n11bracket}
\end{equation}
The lower number follows from one exact rational-frequency evaluation and is therefore a lower bound on the supremum.  For the upper number, the Sturm count of $H_{0.0641023}$ on $[0,1]$ is zero and both endpoint values are positive.  All 12 Hurwitz determinants of the reflection denominator are positive, as are all 12 determinants of the numerator.  Hence the rationalized device is stable and minimum phase.  Minimum phase, together with \cref{eq:asymptotic-r}, also implies exact saturation of the full log-area identity \cref{eq:bode-area} for this particular device.  \Cref{fig:n11} shows the certified spectrum against the exact bound.

By \cref{eq:minimax-operational}, the exact upper certificate implies the controlled-port transfer guarantee
\begin{equation}
 \eta_{\mathrm{ctl}}^{\mathrm{wc}}
 >1-(0.0641023)^2
 =0.99589089513471.
 \label{eq:n11eta}
\end{equation}
On a signal subspace captured isometrically, this is also the write guarantee.  If write and read on that subspace are reciprocal, lossless, and mode matched, the corresponding worst-case round-trip probability exceeds
\begin{equation}
 [1-(0.0641023)^2]^2=0.9917986750122.
 \label{eq:n11roundtrip}
\end{equation}
Equation~\eqref{eq:n11roundtrip} does not include storage-time dephasing or errors from an uncharacterized read pulse.

\begin{figure}[tb]
 \includegraphics[width=\columnwidth]{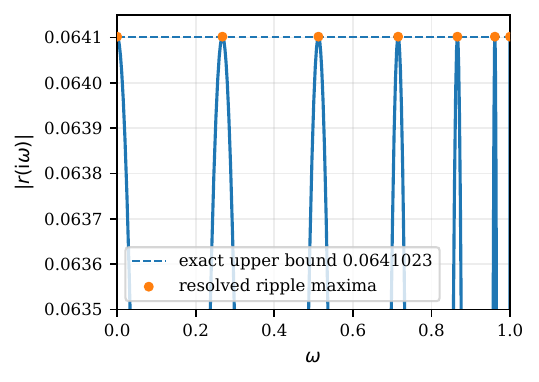}
 \caption{Positive-frequency half of the 11-mode reflection spectrum.  The dashed line is the exact continuum upper bound.  Markers show numerically resolved ripple maxima; they are diagnostics, whereas the bound is proved by exact polynomial arithmetic.}
 \label{fig:n11}
\end{figure}

The same procedure certifies all rationalized candidates in \cref{tab:sequence}.  Bracket widths are below $3\times10^{-7}$.  These are exact brackets for the norms of the listed candidate systems, not brackets for the unknown order-$N$ global optima.  Only the $N=1$ row is also proved globally optimal within its order.

\section{Finite-dimensional signal subspaces and reciprocal readout}
\label{sec:alphabet}

A finite physical register can support a finite-dimensional signal or code subspace, even though that subspace contains a continuum of coherent superpositions.  For orthonormal spectra $f_1,\ldots,f_K$, define the controlled-transfer Gram matrix
\begin{equation}
 G_{jk}=\int_{\B}f_j^*(\omega)
 [1-|r(\ii\omega)|^2]f_k(\omega)\dd\omega.
 \label{eq:gram}
\end{equation}
For a normalized coefficient vector $\bm c$, the controlled-port transfer probability is $\bm c^\dagger G\bm c$.  Therefore
\begin{equation}
 \eta_{\mathrm{ctl},K}^{\mathrm{wc}}=\lambda_{\min}(G).
 \label{eq:gramwc}
\end{equation}
If the capture map is isometric on the image of this alphabet, the same eigenvalue is its worst-case write efficiency.  Such an isometry requires a register dimension at least $K$ by \cref{eq:finite-rank-obstruction}.  The alphabet bound can be less conservative than the full-band transfer certificate because the basis may avoid frequencies at which $|r|$ is largest.  Conversely, reporting only diagonal efficiencies $G_{jj}$ is insufficient; coherent superpositions probe off-diagonal terms.

If the read process is the exact reciprocal adjoint of the write isometry after ideal rephasing, then the signal-space amplitude map is $G$ and the worst-case round-trip probability is $\lambda_{\min}(G)^2$.  In a real device, storage-time dephasing, control errors, thermal occupation, and added noise must be included as independent channel parameters.  Singular values alone cannot certify closeness to the identity when an uncalibrated unitary phase error is present.  This observation corrects a common but invalid replacement of a full channel-distance analysis by a bound involving only the smallest singular value.

\begin{table}[tb]
\caption{Exact continuum brackets for rationalized candidates.  The Bode--Fano floor is $e^{-\pi}=0.043213918\ldots$.}
\label{tab:sequence}
\begin{ruledtabular}
\begin{tabular}{rcc}
$N$ & certified lower bound & certified upper bound\\
\colrule
1  & 0.171572875 & 0.1715731\\
3  & 0.109370452 & 0.1093707\\
5  & 0.087250390 & 0.0872506\\
7  & 0.075745737 & 0.0757460\\
9  & 0.068756643 & 0.0687569\\
11 & 0.06410221 & 0.0641023\\
13 & 0.060833252 & 0.0608335\\
15 & 0.058375564 & 0.0583758\\
17 & 0.056526621 & 0.0565269\\
19 & 0.056027561 & 0.0560278\\
\end{tabular}
\end{ruledtabular}
\end{table}

\section{Robustness and experimental translation}
\label{sec:robustness}

Let
\begin{equation}
 D(\omega)=\ii\omega+\kappa/2+\Sigma(\ii\omega),\qquad
 d=\inf_{\omega\in\B}|D(\omega)|.
\end{equation}
For a perturbation $\delta\Sigma$ satisfying
$\epsilon=\sup_{\B}|\delta\Sigma|<d$, direct subtraction of the two Cayley transforms gives
\begin{equation}
 |r_{\Sigma+\delta\Sigma}-r_\Sigma|
 \le \frac{\kappa\epsilon}{d(d-\epsilon)}.
 \label{eq:robust-bound}
\end{equation}
Thus, if $\|r_\Sigma\|_\infty\le\rmax$,
\begin{equation}
 \eta_{\mathrm{ctl}}^{\mathrm{wc}}(\Sigma+\delta\Sigma)
 \ge 1-\left[\rmax+\frac{\kappa\epsilon}{d(d-\epsilon)}\right]^2.
 \label{eq:robust-eta}
\end{equation}
This is a deterministic certificate once a calibration model provides $\epsilon$.  Parameter-wise tolerances can be converted to $\epsilon$ by interval arithmetic or a triangle bound on the rational atoms.

Figure~\ref{fig:robust} shows a separate, fully reproducible Monte Carlo sensitivity study.  Every positive 11-mode parameter is perturbed independently according to the mean-preserving log-normal model $p'=p\exp(\sigma Z-\sigma^2/2)$ with $Z\sim\mathcal N(0,1)$; 1200 samples per spread, a 12001-point grid on $[0,1]$, and seed 20260711 are used.  This is not a confidence statement about any fabrication process unless that distribution is experimentally justified.  Its purpose is diagnostic: the nominal minimax design is sharp, and robust synthesis or postfabrication retuning is likely to be important.

\begin{figure}[tb]
 \includegraphics[width=\columnwidth]{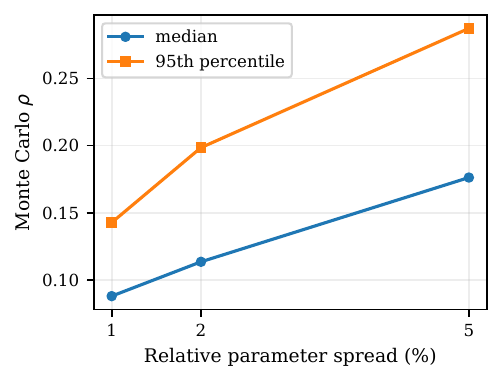}
 \caption{Reproducible Monte Carlo sensitivity of the nominal 11-mode candidate under independent mean-preserving log-normal perturbations.  Solid markers show the median and 95th percentile of the dense-grid maximum.  These statistics are model-dependent diagnostics, not deterministic guarantees.}
 \label{fig:robust}
\end{figure}

For experimental use, the dimensionless design is scaled by a chosen angular-frequency band half-width $B$ with $\kappa=2B$.  The physical mode detunings and amplitude-decay rates (power half-widths at half maximum) are $B\delta_j$ and $B\gamma_j$; the corresponding power full widths at half maximum are $2B\gamma_j$.  Coherent couplings scale as $B\sqrt{w_j}$.  A complete memory characterization should report the complex reflection (or sufficient data for a causal reconstruction), controlled and uncontrolled output ports, the admitted signal subspace, storage time, read protocol, noise occupation, and propagated parameter uncertainty.  Magnitude-only reflection cannot establish minimum phase because all-pass factors leave $|r|$ unchanged.

\section{Stationary versus dynamically controlled memories}
\label{sec:dynamic}

The Bode--Fano result constrains one stationary passive transfer function presented to arbitrary in-band waveforms.  A time-dependent coupler designed for one known incident envelope solves a different problem and does not evade the finite-register obstruction for a single universal linear map on all of $L^2(\B)$.  For a single lossless cavity initially in vacuum, with time-dependent external rate $\kappa(t)$ and a phase control that cancels the input phase chirp, imposing zero output gives the formal schedule
\begin{equation}
 \kappa(t)=\frac{|f(t)|^2}{\displaystyle\int_{-\infty}^{t}|f(\tau)|^2\dd\tau}.
 \label{eq:dynamic-schedule}
\end{equation}
For a pulse with a hard onset or a rapidly decaying leading tail this expression can diverge.  Adding a positive constant to the denominator corresponds to a pre-excited cavity, not a vacuum-memory regularization.  A valid vacuum protocol must instead start in an earlier tail or clip the coupling and account for the resulting reflected norm.  Such mode-specific protocols can approach perfect capture but require waveform knowledge, control bandwidth, timing, and dynamic range \cite{Dilley2012,Greggio2026Optimal}.  They do not contradict \cref{eq:BF-floor}; they abandon stationarity and universality over the full band.

The comparison suggests a resource taxonomy.  Passive multiresonator synthesis spends hardware order to improve an arbitrary-waveform band guarantee.  Dynamic capture spends real-time control and prior waveform information.  Hybrid architectures can use a passive front end to reduce sensitivity and a bounded dynamic coupler to remove residual mismatch.

\section{Scope, limitations, and outlook}

The exact certificate converts a sampled numerical observation into a statement about a fully specified rational system.  Its proof object consists of rational parameters, coefficient hashes, endpoint signs, a Sturm count, and Hurwitz determinant signs; it can be checked without reproducing or trusting the optimization that located the candidate.

The remaining limitations delimit rather than qualify the proved results.  First, fixed-support convexity does not make the movable-pole problem convex.  An order-specific lower bound matching the 11-mode upper certificate would require a dual extremal measure, a valid nonlinear alternation theorem for the constrained family, or another global argument.  Second, \cref{eq:sharp-infimum} settles the limiting value but not the finite-order convergence law.  Root-exponential fits over $N\le19$ remain hypotheses, and optimizer stagnation is visible at $N=19$.  Third, existing multiport Bode--Fano bounds \cite{NieHochwald2017} must be specialized to the quantum controlled-output dilation, the diagonal or matrix positive-residue realization, and a worst-case singular-value objective.  In a genuinely multiport memory, worst-case prompt scattering is controlled by the largest singular value of the reflection matrix, and a rank-deficient storage coupling leaves exactly dark incident directions.  Finally, an actual finite memory must specify a finite-dimensional admitted signal subspace and a physical capture/read protocol; scattering performance alone cannot evade \cref{eq:finite-rank-obstruction}.

These limitations suggest direct extensions.  Interval-valued device parameters can be incorporated into the polynomial certificate by cylindrical algebraic decomposition or sum-of-squares relaxations.  Fixed-support synthesis can be combined with exact a posteriori active-frequency cuts.  Matrix-valued positive-real interpolation can treat polarization, spatial, or frequency-bin ports.  Finally, experimentally measured complex scattering data can be fitted by a passive rational model and passed through the same exact or interval certification pipeline.

\section*{finite-dimensional reduction of the lower certificate}

The boundary-dual analysis of scalar interpolation problems supplies a
concrete reduction for the unresolved finite-order lower bound.

\begin{theorem}[Atomic reduction of a dual lower certificate]
\label{thm:third-memory-atomic-dual}
Assume that a valid positive dual lower certificate for order \(N\) is
specified by its mass, \(d_N\) real moment constraints, and the value of
one continuous objective integrand on a compact frequency domain.  Then a
certificate with the same constraints and objective exists with at most
\(d_N+2\) atoms.  If the dual object is signed, it can be chosen with at
most \(2(d_N+2)\) atoms after preserving the two Jordan masses.
\end{theorem}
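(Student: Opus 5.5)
This is a Carathéodory/Tchakaloff-type reduction, and the plan is to reduce it to finite-dimensional convex geometry.

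\textbf{Setup.} Let the compact frequency domain be $K$, and let the given positive certificate be a finite Borel measure $\mu\ge0$ on $K$. The data the statement requires to be preserved are:
\begin{itemize}
\item the mass $\mu(K)$;
\item the $d_N$ moments $\int \phi_i\,\dd\mu$, with continuous real $\phi_i$;
\item the objective value $\int \psi\,\dd\mu$, with $\psi$ continuous.
\end{itemize}
Collect them in the continuous map $\Phi=(1,\phi_1,\dots,\phi_{d_N},\psi):K\to\mathbb{R}^{d_N+2}$.

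\textbf{Moment vector lies in a convex hull.} If $\mu=0$, the empty certificate already works. Otherwise normalize to a probability measure $\mu/\mu(K)$. Its moment vector $m=\int\Phi\,\dd\mu/\mu(K)$ lies in the closed convex hull of $\Phi(K)$. Since $K$ is compact and $\Phi$ is continuous, $\Phi(K)$ is compact. In finite dimensions the convex hull of a compact set is already compact, so $m$ lies in $\operatorname{conv}\Phi(K)$ itself. To justify the membership claim, I will argue by contradiction with a separating hyperplane: if $m$ lay outside, some linear functional $\ell$ would satisfy $\ell(m)>\max_K \ell\circ\Phi$. Integrating $\ell\circ\Phi$ against the probability measure gives $\ell(m)\le\max_K\ell\circ\Phi$, a contradiction.

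\textbf{Carathéodory step.} Carathéodory's theorem in $\mathbb{R}^{d_N+2}$ writes $m=\sum_j\lambda_j\Phi(\omega_j)$ with at most $d_N+3$ points. The first coordinate is identically $1$, so everything actually lives in the affine hyperplane $\{x_0=1\}$, which has dimension $d_N+1$. Carathéodory there gives at most $d_N+2$ points. Then
\begin{equation}
\nu=\mu(K)\sum_j\lambda_j\delta_{\omega_j}
\end{equation}
is a positive atomic measure with at most $d_N+2$ atoms.

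\textbf{Validity of the reduced certificate.} The measure $\nu$ reproduces the mass, all $d_N$ moments, and the objective value. By hypothesis, a certificate's validity depends only on these quantities, so $\nu$ is again a valid dual lower certificate. The points $\omega_j$ lie in $K$ (or in any closed support set carrying $\mu$, such as the contact set, if $\Phi$ is restricted there first).

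\textbf{Signed case.} For a signed certificate, take the Jordan decomposition $\mu=\mu^+-\mu^-$ with $\mu^\pm\ge0$. Apply the positive construction separately to each part. This gives atomic measures $\nu^\pm$ that match the masses $\mu^\pm(K)$ and the $\Phi$-moments of $\mu^\pm$, each with at most $d_N+2$ atoms. By linearity, $\nu^+-\nu^-$ matches every moment and the objective value of $\mu$, and preserves both Jordan masses. It has at most $2(d_N+2)$ atoms. Atoms of $\nu^+$ and $\nu^-$ may coincide; that only merges terms and does not increase the count. If validity needs the total variation $\|\mu\|$, it is still bounded by $\mu^+(K)+\mu^-(K)$, which is preserved.

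\textbf{Main obstacle.} The main obstacle is interpretive rather than technical. One must pin down exactly that the certificate's validity depends on $\mu$ only through the listed finitely many continuous linear functionals. Any additional constraint, for example a pointwise inequality or a support condition, must either be a convex condition that is preserved by the construction, or be imposed by restricting $K$ beforehand.
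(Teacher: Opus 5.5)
Your proposal is correct and matches the paper's proof: you normalize the positive certificate, apply Carath\'eodory to the moment-plus-objective vector (restricting to the affine hyperplane $x_0=1$ is what gives $d_N+2$ rather than $d_N+3$ atoms), and handle the two Jordan parts separately in the signed case. One small correction: if atoms of $\nu^+$ and $\nu^-$ coincide, the Jordan masses of $\nu^+-\nu^-$ each shrink, so merging is not harmless. To preserve those masses literally, place the atoms of $\nu^\pm$ in the disjoint Hahn-decomposition sets carrying $\mu^\pm$; this works because the normalized moment vector of a measure concentrated on a Borel set still lies in the convex hull of that set's image.
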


\begin{proof}
Apply Carath\'eodory's theorem to the vector of moment functions together
with the objective, after normalizing each positive measure.  For a signed
measure, apply the positive result to its Jordan parts.
\end{proof}

For \(N=11\), this turns the search for a matching lower certificate from
an unrestricted measure problem into a finite system of atom locations,
weights, moment equations, and one global positive-real inequality.
The theorem does not assert that the matching certificate exists, but it
removes diffuse measures as an independent obstruction whenever the
dual formulation has the stated finite-moment form.  Candidate atoms may
be discovered numerically; the final proof must encode their algebraic
values or certified intervals and verify the universal inequality
independently of the optimizer.

\section*{attainment of coercive dual lower certificates}

The finite-atomic reduction yields a complete existence statement for
positive duals whose moments control total mass.

\begin{theorem}[Coercive memory-dual attainment]
\label{thm:frontier-memory-attainment}
Let the compact frequency band be \(X\), and suppose the positive dual
feasible set is cut out by finitely many continuous moment equalities,
one of which has the form
\[
 \int_X f_0\,d\lambda=c_0,\qquad f_0\ge c>0.
\]
If the dual objective is continuous, then its optimum is attained by a
positive measure.  With \(d_N\) remaining real moments, an optimal
certificate can be chosen with at most \(d_N+2\) atoms.
\end{theorem}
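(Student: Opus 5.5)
The plan is the direct method of the calculus of variations on the space of positive Radon measures on the compact band $X$, followed by the Carath\'eodory reduction of \cref{thm:third-memory-atomic-dual}.

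\emph{Step 1: compactness.} Every feasible positive measure $\lambda$ has uniformly bounded mass. Indeed, since $f_0\ge c>0$,
\[
 c\,\lambda(X)\le\int_X f_0\,d\lambda=c_0,
 \qquad\text{so}\qquad
 \lambda(X)\le c_0/c .
\]
Thus the feasible set lies in the closed ball of radius $c_0/c$ in $C(X)^*=M(X)$. By Banach--Alaoglu this ball is weak-$*$ compact, and $C(X)$ is separable for compact metric $X$. Each constraint $\lambda\mapsto\int_X f_i\,d\lambda$ with $f_i\in C(X)$ is weak-$*$ continuous, and the positive cone is weak-$*$ closed. Hence the feasible set is a weak-$*$ compact subset of $M_+(X)$. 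If it is empty, the statement is vacuous; I will assume feasibility, as the paper implicitly does.

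\emph{Step 2: attainment.} I interpret ``continuous dual objective'' in the same finite-moment sense as \cref{thm:third-memory-atomic-dual}, namely $\lambda\mapsto\int_X g\,d\lambda$ with $g\in C(X)$. More generally, any weak-$*$ continuous (or upper semicontinuous, for a maximization) functional would do. With this interpretation, take a maximizing sequence, extract a weak-$*$ convergent subnet, or a subsequence by metrizability of the bounded ball, and pass to the limit in both the constraints and the objective. This gives an optimal positive measure $\lambda^\star$. The main obstacle is precisely this interpretive point. If the objective is only norm-continuous on $M(X)$ rather than weak-$*$ continuous, the argument fails. I would therefore state explicitly that the objective is the integral of a continuous integrand, which matches the hypothesis of \cref{thm:third-memory-atomic-dual}.

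\emph{Step 3: atomic reduction.} Apply \cref{thm:third-memory-atomic-dual} to $\lambda^\star$. Normalize to the probability measure $\lambda^\star/\lambda^\star(X)$; the case $\lambda^\star=0$ is excluded because $c_0>0$, which is forced by $\int f_0\,d\lambda=c_0$ together with $f_0>0$. The vector of integrals of the $d_N$ remaining moments and the objective integrand $g$ then lies in the convex hull of the compact curve $\{(f_1(x),\dots,f_{d_N}(x),g(x)) : x\in X\}\subset\mathbb{R}^{d_N+1}$. By Carath\'eodory's theorem it is a convex combination of at most $d_N+2$ points. Rescaling by $\lambda^\star(X)$ yields an atomic positive measure with the same mass, the same $d_N$ moments, and the same objective value.

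One detail remains: the $f_0$ constraint must also be preserved. I would handle it by including $f_0$ among the constrained moments, so the count stays $d_N+2$ with mass fixed by normalization. Alternatively, mass can be dropped as a separate coordinate by normalizing against $f_0$ instead of total mass: the measure $f_0\lambda^\star/c_0$ is a probability measure, and one applies Carath\'eodory to $(f_i/f_0,\,g/f_0)$, which are continuous because $f_0\ge c>0$. This gives $d_N+2$ atoms while exactly preserving all constraints and the objective. The resulting atomic measure is feasible and attains the optimal value, so it is an optimal certificate.
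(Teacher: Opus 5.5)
Your proof is correct and follows the paper's route (mass bound $c_0/c$, Banach--Alaoglu with weak-$*$ closed moment constraints, attainment, then Carath\'eodory compression). It also makes explicit two points the paper's one-line proof leaves implicit: the objective must be weak-$*$ (upper semi)continuous, e.g.\ $\int_X g\,d\lambda$ with $g\in C(X)$, and the compression must preserve the $f_0$ moment, which your normalization $f_0\lambda^\star/c_0$ with coordinates $(f_i/f_0,\,g/f_0)\in\mathbb{R}^{d_N+1}$ does with exactly $d_N+2$ atoms. Keep only that second device: your first alternative (adding $f_0$ as a moment while still fixing total mass by normalization) puts the Carath\'eodory vector $(f_0,f_1,\dots,f_{d_N},g)$ in $\mathbb{R}^{d_N+2}$ and gives $d_N+3$ atoms, not $d_N+2$.
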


\begin{proof}
The coercive equality bounds the mass by \(c_0/c\).  Banach--Alaoglu and
closedness of continuous moment constraints give weak-* compactness.
The objective attains its optimum, and the atomic-reduction theorem
compresses an optimizer without changing its moments or value.
\end{proof}

For the order-\(11\) lower-bound programme this removes both diffuse mass
and nonattainment whenever a coercive positive dual formulation is
established.  The remaining algebraic inequality over all passive designs
is the only model-specific part of that certificate.

\section*{Localization of every optimal dual certificate}

Dual complementarity identifies the frequencies at which an attained
positive certificate can place mass.

\begin{theorem}[Contact-set support theorem]
\label{thm:final-memory-contact}
Consider the attained positive-measure programme of
Theorem~\ref{thm:frontier-memory-attainment} in the form
\[
 \max_{\lambda\ge0}\int_X g\,d\lambda,
 \qquad
 \int_X f_j\,d\lambda=c_j.
\]
Suppose a dual optimum \(y^*\) exists with nonnegative slack
\[
 s_{y^*}(x)=\sum_jy_j^*f_j(x)-g(x)\ge0.
\]
Then every primal optimizer \(\lambda^*\) is supported on the contact set
\[
 Z(y^*)=\{x\in X:s_{y^*}(x)=0\}.
\]
If \(Z(y^*)\) has \(z<\infty\) points, an optimal certificate uses at most
\(\min\{z,d_N+2\}\) atoms.
\end{theorem}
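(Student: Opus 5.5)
The plan is the standard complementary-slackness argument for linear programs over measures, followed by a restriction of the atomic reduction to the contact set.

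\emph{Support localization.} Fix a primal optimizer $\lambda^*$; it exists by Theorem~\ref{thm:frontier-memory-attainment}. Integrate the slack against it. Linearity and the moment equalities give
\[
 \int_X s_{y^*}\,d\lambda^*=\sum_j y_j^*c_j-\int_X g\,d\lambda^*.
\]
The key point is the absence of a duality gap: $\sum_j y_j^*c_j$ equals the primal optimal value $\int_X g\,d\lambda^*$. I will take this as part of the meaning of ``dual optimum'' in the statement. Equivalently, weak duality $\int g\,d\lambda\le\sum_j y_jc_j$ holds for any feasible pair, because $s_y\ge0$ and $\lambda\ge0$, and equality of the optimal values is the hypothesis that makes $y^*$ a certifying dual. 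Hence $\int_X s_{y^*}\,d\lambda^*=0$. Since $s_{y^*}$ is continuous and nonnegative, and $\lambda^*$ is a positive Borel measure, this forces $\lambda^*(\{s_{y^*}>0\})=0$. The set $\{s_{y^*}>0\}$ is open, so this is exactly the statement $\operatorname{supp}\lambda^*\subseteq Z(y^*)$, where $Z(y^*)$ is closed. The argument applies to every optimizer, not just one chosen optimizer.

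\emph{Atom count.} Suppose $Z(y^*)=\{x_1,\dots,x_z\}$ is finite. Then $\lambda^*=\sum_{i\le z}a_i\delta_{x_i}$ with $a_i\ge0$, so at most $z$ atoms are used. For the other bound, apply the Carath\'eodory reduction of Theorem~\ref{thm:third-memory-atomic-dual} to $\lambda^*$. The reduction preserves the moments $\int f_j\,d\lambda$ and the objective $\int g\,d\lambda$, so the result is again feasible and optimal. It therefore lies in $Z(y^*)$ by the first step, and it has at most $d_N+2$ atoms. To get both bounds at once, I will run the Carath\'eodory step inside the finite cone generated by the vectors $(f_j(x_i),g(x_i))$ for $i\le z$. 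Its atoms are then a subset of the original $z$ points, and their number is at most $\min\{z,d_N+2\}$.

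\emph{Main obstacle.} The only delicate step is the zero-gap assertion. I would first try to read it off the hypotheses: existence of an attaining $y^*$ together with the primal attainment from Theorem~\ref{thm:frontier-memory-attainment}. Failing that, I would derive it from the standard strong duality for moment problems. The coercive constraint $f_0\ge c>0$ gives a Slater-type interior condition in the moment cone, which closes the gap. A secondary check is that the Carath\'eodory step preserves support. This holds because the reduction selects atoms from the support of the measure being reduced.
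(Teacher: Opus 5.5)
Your proposal is correct and follows the paper's proof. Zero duality gap gives $\int_X s_{y^*}\,d\lambda^*=0$, nonnegativity confines $\lambda^*$ to $Z(y^*)$, and a Carath\'eodory reduction on the contact set yields at most $\min\{z,d_N+2\}$ atoms; the paper also just assumes strong duality rather than proving it.

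One small correction to your fallback: $f_0\ge c>0$ gives no Slater interior point, since the moment vector can lie on the boundary of the moment cone (e.g.\ when a single Dirac mass is the only feasible measure). What removes the gap is that coercivity makes the moment cone closed, as the conic hull of a compact set bounded away from the origin.
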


\begin{proof}
Strong duality and feasibility give
\[
 0=y^*\!\cdot c-\int g\,d\lambda^*
 =\int_Xs_{y^*}\,d\lambda^*.
\]
The integrand and measure are nonnegative, so the measure is concentrated
on the zero set.  Intersect this restriction with the atomic bound of
Theorem~\ref{thm:frontier-memory-attainment}.
\end{proof}

The order-\(11\) search can therefore be reduced to the zeros of one
dual slack function.  A finite contact set turns the infinite-frequency
certificate into a finite exact algebraic object.

\section{Conclusion}

We have formulated a passive multiresonator quantum-memory front end as a spectral-admittance problem with an explicit controlled-output dilation and a subspace-qualified capture map.  This identifies the precise condition under which reflection defect becomes reversible memory-write probability.  It also exposes a finite-register obstruction: no finite register can have positive worst-case write efficiency on the full space of arbitrary band-supported waveforms under one universal linear write map.

For stationary diagonal passive-atom front ends, the Bode--Fano floor is the exact infinite-order reflection infimum.  The explicit outer extremal and its positive Herglotz measure prove achievability in the finite-atom closure, while rationality proves strict nonattainment at every finite order.  Fixed pole locations lead to a quasiconvex oscillator-strength subproblem; global finite-order pole placement and its asymptotics remain open.

For the explicit 11-mode candidate in \cref{tab:n11}, exact Sturm and Routh--Hurwitz arithmetic proves uniform reflection below $0.0641023$ and controlled-port transfer above $0.995890895$ on the normalized band.  The certificate is independent of frequency sampling; source code, regression tests, and machine-readable proof data are included.  On a finite-dimensional signal subspace captured isometrically, it is also a write-efficiency guarantee.  Its scope is deliberately precise: it certifies a strong passive front end, not global finite-order optimality or an experimentally complete memory.

\section*{Data and code availability}
The ancillary directory \texttt{anc/} contains the rational design parameters,
deterministic source code, regression tests, and build instructions needed to
regenerate the proof records, numerical tables, and figures.  After installing
\texttt{anc/requirements.txt}, the command
\begin{center}
\small\texttt{cd anc \&\& python3 code/reproduce.py --quick}
\end{center}
regenerates the exact and independent certificate records, runs the regression
tests, and compares all archived figures.  The standard-library cross-check
\begin{center}
\small\texttt{python3 anc/code/independent\_verifier.py --designs anc/data/designs.json}
\end{center}
can be run without the numerical stack.  Generated reports are written only
to a user-created reproduction directory and are not trusted inputs.  No
proprietary or access-restricted data were used.

\section*{Scope, limitations, and open problems}

The exact continuum infimum, passive realization constraints, finite-register
obstruction, and rational finite-order upper certificates are established
within the stated one-port model.  The revised exact run certified every odd
order through \(N=19\), and all six independent core regression tests pass.
The optimization that found a design is not trusted by the certificate.

A principal open problem is a matching finite-order lower bound, beginning with the
fully certified \(N=11\) instance.  A dual extremal measure or a valid
nonlinear alternation theorem would close that gap.  The subsequent
multiport theorem must use matrix positive-real interpolation and the largest
reflection singular value, and must explicitly include the finite admitted
signal subspace and capture/read protocol before the scattering bound can be
interpreted as memory efficiency.

\bibliographystyle{unsrt}
\section*{dual-measure lower certificate}

The finite-order lower-bound target can be put in a proof-carrying form.
For any admissible \(N\)-mode design \(\theta\), write its band defect as
\(f_\theta(\omega)=|r_\theta(i\omega)|^2\).  Every probability measure
\(\lambda\) on the band gives
\[
 \sup_{\omega\in[-1,1]}f_\theta(\omega)
 \ge \int_{-1}^{1}f_\theta(\omega)\,\dd\lambda(\omega).
\]
Consequently, if exact positive-real identities imply
\[
 \inf_{\theta\in\mathcal A_N}
 \int f_\theta\,\dd\lambda \ge L_N,
\]
then \(\lambda\), together with the algebraic proof of the last inequality,
is a global finite-order lower certificate.

For an atomic rational \(\lambda\), the first inequality and all moment
constraints are exact rational statements.  This separates discovery from
proof in the same way as the existing upper certificates: candidate
ripple points may suggest the atoms, but the final certificate must verify
the integral inequality for every positive-residue realization, not only
for the optimized candidate.

The \(N=11\) gap is therefore reduced to one explicit dual object.  The
present revision supplies the certificate format and exact verification
conditions; constructing a measure whose \(L_{11}\) matches the certified
upper value remains the unresolved nonlinear extremal step.

\bibliography{references}

\end{document}